\title[Multi-agent assignment via state augmented reinforcement learning]{Multi-agent assignment via state augmented reinforcement learning}
\author{%
 \Name{Leopoldo Agorio} \Email{lca31@pitt.edu}\\
 \addr University of Pittsburgh, Pittsburgh, PA, USA
 \AND
 \Name{Sean Van Alen} \Email{sgv13@pitt.edu}\\
 \addr University of Pittsburgh, Pittsburgh, PA, USA
 \AND
 \Name{Miguel Calvo-Fullana} \Email{miguel.calvo@upf.edu}\\
 \addr Universitat Pompeu Fabra, Barcelona, Spain%
 \AND
 \Name{Santiago Paternain} \Email{paters@rpi.edu}\\
 \addr Rensselaer Polytechnic Institute, Troy, NY, USA%
 \AND
 \Name{Juan Andr\'es Bazerque} \Email{juanbazerque@pitt.edu}\\
 \addr University of Pittsburgh, Pittsburgh, PA, USA%
}
\newtheorem{assumption}{Assumption}
\newcommand{\reviewer}[1]{  \ifthenelse{\boolean{showcomments}}
{\todo[inline,color=lime]{Reviewer: #1}}{}}
\newcommand{\response}[1]{  \ifthenelse{\boolean{showcomments}}
{\todo[inline,color=orange]{Response: #1}}{}}
\newcommand{\santiago}[1]{  \ifthenelse{\boolean{showcomments}}
{\todo[inline,color=cyan]{Santiago: #1}}{}}
\newcommand{\miguel}[1]{  \ifthenelse{\boolean{showcomments}}
{\todo[inline,color=green]{Miguel: #1}}{}}
\newcommand{\leopoldo}[1]{  \ifthenelse{\boolean{showcomments}}
{\todo[inline,color=yellow]{Leopoldo: #1}}{}}
\newcommand{\juan}[1]{  \ifthenelse{\boolean{showcomments}}
{\todo[inline,color=pink]{Juan: #1}}{}}
\DeclareMathOperator*{\argmax}{argmax}
\DeclareMathOperator{\subjectto}{subject\ to}
\begin{document}

\maketitle

\begin{abstract}%
  We address the conflicting requirements of a multi-agent assignment problem through constrained reinforcement learning, emphasizing the inadequacy of standard regularization techniques for this purpose. Instead, we recur to a state augmentation approach in which the oscillation of dual variables is exploited by agents to alternate between tasks. In addition, we coordinate the actions of the multiple agents acting on their local states through these multipliers, which are gossiped through a communication network, eliminating the need to access other agent states. By these means, we propose a distributed multi-agent assignment protocol with theoretical feasibility guarantees that we corroborate in a monitoring numerical experiment.
\end{abstract}

\begin{keywords}%
Constrained reinforcement learning, multi-agent assignment, monitoring problem.
\end{keywords}

\section{Introduction}
Reinforcement Learning (RL) has experienced growing interest as a methodology for designing optimal controllers in cases where system dynamics are intractable or unknown \cite{kahn2017uncertainty}. It has attained remarkable achievements in recent years, with agents capable of playing Go \cite{silver2016mastering} and Poker \cite{brown2019superhuman}, applications in robotics \cite{kober2013reinforcement}, and in training recommendation systems and natural language models \cite{christiano2017deep}.

Incorporating constraints into the RL problem offers extra capabilities, such as ensuring safety in robot navigation \cite{miryoosefi2019reinforcement,paternain2022safe}, imposing physical conditions \cite{li2019constrained,gao2020batch,porteiro2024combined}, satisfy natural language rules \cite{ouyang2022training}, or in general, setting problem specifications. 
%
%
Constrained RL is commonly addressed through regularization to balance conflicting requirements,  weighting the rewards to create a multi-objective RL problem \cite{kahn2017uncertainty}. In this approach,  selecting the appropriate regularization weights via duality \cite{paternain2019constrained}, Bayesian rules \cite{hutter2002self}, or simply heuristics is crucial for achieving good performance \cite{tessler2018reward}. 

However popular, there are problems for which regularization approaches are insufficient to ensure feasibility \cite{calvo2023state}. In this paper, we focus on one such family of problems, assignment problems \cite{bertsekas1981new,mills2007dynamic,chopra2017distributed}, in a dynamic multi-agent setup in which several tasks are assigned to a team of agents.
We will argue that regularized formulations of these multi-agent assignment RL problems fail to produce policies satisfying the conflicting constraints, lacking convergence.
To overcome this challenge, we leverage a recent state augmentation approach, where the Lagrangian weights are reinterpreted as states of an augmented Markov Decision Process (MDP) \cite{calvo2023state}. Under this model, the multipliers are not driven to convergence to a dual optimum. Instead, they are allowed to oscillate to induce alternating policies that satisfy the conflicting tasks sequentially.

We further demonstrate that these multipliers can be used for coordinating multiple agents, especially when many conflicting tasks cannot be achieved by a single agent or several ones acting independently, a state-of-the-art challenge in multi-agent reinforcement learning \cite{zhang2021multi}.    
In this context, we propose an assignment protocol to coordinate a team of agents running pre-trained distributed policies, where each agent might have only local information about the state of the system. For a fully distributed algorithm, we use the gossiping framework \cite{aysal2009broadcast} to share the multipliers across a communication network established by the agents, and we prove consensus in finite time, taking advantage of the binary nature of the assignment variables.

 Our state augmentation and multi-agent assignment protocol come with theoretical guarantees of almost sure feasibility, which we corroborate with numerical experiments involving a team of robots that must patrol an area of interest alternating between different zones.

\section{Problem formulation} \label{sec_probform}

Consider $N$ agents interacting with an environment that is modeled as an MDP with transition probabilities $\mathbb{P}\left(S_{t+1}\mid S_{t},A_{t}\right)$. For each time index $t=0,1,2,\ldots$, variables $S_{t}=(S_{t1},\ldots,S_{tN})$ and $A_{t}=(A_{t1},\ldots,A_{tN})$ collect the states $S_{tn}\in \mathcal S$ and actions $A_{tn}\in \mathcal A$ for all agents. The team receives $M$ rewards $r_m(A_t,S_t)$ associated with $M$ different tasks that need to be satisfied collectively. We consider a family of assignment problems in which rewards are given by
\begin{equation}\label{eqn_patrolling_reward} 
    r_m(S_t,A_t) = \max_{n=1,\ldots,N} \mathds{1}[S_{tn}\in \mathcal{S}_m]=\left\{\begin{array}{ll}1 & \text{if } \exists n\in \{1,\ldots,N\}: S_{tn}\in \mathcal{S}_m\\0 & \text{otherwise}\end{array}\right.,
\end{equation}
defined in terms of $M$ regions $\mathcal S_m\subset \mathcal S$ of the state space. Then, we set the requirements as $M$ thresholds $(c_1,\ldots,c_M)\in[0,1]^M$ on the average rewards and formulate a constrained Markov decision problem with  
constraints  $V_m(\pi)= \lim_{T\to \infty}\frac{1}{T} \mathbb E_{S_t, A_t\sim \pi} \left [ \sum_{t=0}^{T-1} r_m(S_t,A_t) \right ] \geq c_m, \ m= 1, \hdots ,M$.
%
The average is taken over the transition probabilities and the probability distribution defining the control policy $\pi(A_t\mid S_t)$  of the actions of all agents given their states. To avoid the exponential growth of the policy with the number of agents, we impose a separate design $\pi(A_t\mid S_t) = \prod_{n=1}^N \pi_n(A_{tn}\mid S_{tn})$, in which each agent decides its action taking only into account its local state and not the state of other agents.  
%
 %
It is important to remark that even if the policies are local, the agents are still coupled by their dynamics and joint rewards. 
\begin{example}\label{example}
As a practical example of the assignment problem described by \eqref{eqn_patrolling_reward}, consider a team of $N$ robots that are required to monitor $M>N$ areas,  visiting each area $\mathcal S_m$ a fraction of time $c_m$. These conflicting requirements may not be achievable by a single agent, particularly when $\sum_{m=1}^M c_m>1$, and not even by multiple agents acting independently following the same optimal single-agent policy. Therefore, coordination is essential to solve this problem successfully.
\end{example}

A task reward $r_0(S_t,A_t)$ can be added, together with its corresponding optimization objective  $V_0(\pi)=\lim_{T\to \infty}\frac{1}{T} \mathbb E_{S_t, A_t\sim \pi} \left [ \sum_{t=0}^{T-1} r_0(S_t,A_t) \right ]$. This would be useful, for instance, to set a secondary goal, such as spending the least amount of energy on the multi-robot surveillance task in Example \ref{example}. However, in the following, we will focus on the primary goal of satisfying the specifications, setting  $V_0(\pi)=r_0=0$. 
Specifically, the feasibility problem we aim to solve is
\begin{align}
P^\star =&\max_{\pi_1,\ldots,\pi_N} 0,\quad 
\subjectto \lim_{T\to \infty}\frac{1}{T} \mathbb E_{S_t, A_t\sim \pi} \left [ \sum_{t=0}^{T-1} r_m(S_t,A_t) \right ] \geq c_m, \ m= 1, \hdots ,M,\label{eqn_crl}
\end{align}
with  $r_m(S_t,A_t)$ as in \eqref{eqn_patrolling_reward} and separate  policy structure $\pi(A_t\mid S_t) = \prod_{n=1}^N \pi_n(A_{tn}\mid S_{tn})$.
\section{State augmented Markov decision process}\label{sec_lagrangian}
We start this section by defining the Lagrangian associated with problem \eqref{eqn_crl}.
\begin{equation}\label{eqn_lagrangian}
\mathcal{L} (\pi,\lambda) = \lim_{T\to \infty}\frac{1}{T} \mathbb E_{S_t, A_t\sim \pi} \left [ \sum_{t=0}^{T-1} \sum_{m=1}^M \lambda_m\left(r_m(S_t,A_t)-c_m\right) \right ], 
\end{equation}
where in the previous expression $\lambda \in\mathbb{R}_+^M$ are the dual variables. 
The previous expression can be simplified by defining the following reward parameterized by $\lambda$
\begin{equation}\label{eqn_reward_lambda}
    r_\lambda(S_t,A_t) :=  \sum_{m=1}^M \lambda_m (r_m(S_t,A_t)-c_m).\\ 
\end{equation}
%
Under the reward structure of \eqref{eqn_patrolling_reward}, standard regularized methodologies are unable to produce feasible policies \cite[Proposition 1]{calvo2023state}. Hence, we aim to solve problem \eqref{eqn_crl} by a two-step process. First, in an offline training stage, we will learn the family of policies parameterized by $\lambda$ that solve the regularized version of \eqref{eqn_crl}, i.e.,
\begin{align}\label{eqn_optimal_set}
&\Pi(\lambda) = \argmax_{\pi_1,\ldots,\pi_N} \lim_{T\to \infty}\frac{1}{T} \mathbb E_{S_t, A_t\sim \pi} \left [ \sum_{t=0}^{T-1} r_{\lambda}(S_t,A_t) \right ].
\end{align}
Then, in the deployment stage, each agent will sample from policies $\Pi(\lambda)$, with $\lambda$ following dual gradient descent. To be formal, let us define the dual function as
%
 $  d(\lambda) := \max_{\pi} \mathcal{L}(\pi(\lambda),\lambda),$ 
%
where $\pi(\lambda) \in \Pi(\lambda)$. 
{For the optimal multipliers, we minimize $d(\lambda)$ over $ \lambda_m\geq 0$. 
By Danskin's Theorem \cite{danskin1966theory}, we substitute the constraints evaluated at the optimal policy for the gradient of $d(\lambda)$ to obtain the dual update
%
    $\lambda^{k+1} = \left[\lambda^{k}-\eta\left(V(\pi(\lambda^{k})-c\right)\right]_+,$
}
where $[\cdot]_+$ denotes the projection on the non-negative orthant \textcolor{black}{and $c=(c_1,\ldots,c_M)^T$.} Section \ref{sec_online_execution} will incorporate a truncated stochastic version of these dual                  dynamics into an augmented MDP with state $(S_t,\lambda^{k})$ in two timescales. 

\section{Offline training}\label{sec_offline_training}
In this section, we parameterize the policy of each agent by a vector $\theta_n \in \mathbb{R}^{p_n}$ to deal with the continuous augmented state-space $\mathcal{S}\times {\mathbb{R}^M_+}$. Hence, each agent's action is drawn from a distribution~$\pi_{\theta_n}(A_{tn} \mid S_{tn}, \lambda)$. As in Section \ref{sec_probform},  the joint probability distribution is $\prod_{n=1}^N\pi_{\theta_n}(A_{tn}\mid S_{tn},\lambda)$.  Thus, the policy gradient gradient Theorem \cite{sutton2000policy} takes a form that allows for distributed estimation (see Proposition \ref{prop_gradients}). Before stating this result, we define some key concepts.

For fixed multipliers and parameters $\theta \in \mathbb{R}^p$, where $p = \sum_{n=1}^N p_n$, we define the following occupancy measure for the joint state and actions of all agents,
%
 $   \rho_{\theta,\lambda}(s,a) = \lim_{t\to \infty} \mathbb{P}(S_t=s,A_t=a).$
%
Likewise,  $   Q_{\lambda}^{\pi_\theta}(s,a) = \sum_{t=0}^{\infty} \mathbb{E}_{S_t,A_t\sim \pi_\theta}\left[r_\lambda(S_t,A_t)-\mathcal L(\pi_\theta,\lambda)\big| S_0=s,A_0 =a\right],$  defines the state-action value function,
%
where $r_\lambda(s,a)$ and $\mathcal{L}(\theta,\lambda)$ are the functions defined in \eqref{eqn_reward_lambda} and \eqref{eqn_lagrangian} respectively. Notice that we omitted the dependency of  $\pi_\theta$  on the multipliers $\lambda$  to simplify the notation. We further define the state-action value function for agent $n$ as
\begin{equation}\label{eqn_q_others}
    Q_{n,\lambda}^{\pi_\theta}(s,a) = \mathbb{E}_{(S_t,A_t)~\sim\rho_{\theta,\lambda}}\left[Q_{\lambda}^{\pi_\theta}(S_t,A_t)\mid S_{tn}=s, A_{tn}=a \right].
\end{equation}
From the perspective of agent $n$, this quantity is the average of the expected return, assuming that all other agents are following policy $\pi_\theta$. \textcolor{black}{From this perspective,  $Q_{n,\lambda}^{\pi_\theta}(s,a)$ is estimated per agent by averaging the global rewards that result from their local states and actions, or modeled as a parametric function of these local entries.} We are now in conditions of providing the specific form that the policy gradient Theorem takes under the conditions described in this section.
\begin{proposition}\label{prop_gradients}
Assume that the policy of each agent is parameterized by a vector $\theta_n\in\mathbb{R}^n$ and that $A_n\sim \pi_{\theta_n}(\cdot\mid S_n,\lambda)$. Let $\mathcal{L}(\pi_\theta,\lambda)$ and $Q_{n,\lambda}^{\pi_\theta}(S_n,A_n)$ be the functions defined in \eqref{eqn_lagrangian} and \eqref{eqn_q_others} respectively. Then, it follows that
%
  $  \nabla_{\theta_n} \mathcal{L(\pi_\theta,\lambda)} = \mathbb{E}_{S_n,A_n \sim \rho_{\theta,\lambda} }\left[\nabla_{\theta_n }\log \pi_{\theta_n}(A_n\mid S_n,\lambda)Q_{n,\lambda}^{\pi_\theta}(S_n,A_n)\right].$
\end{proposition}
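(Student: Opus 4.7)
The plan is to obtain the stated identity as a direct specialization of the classical average-reward policy gradient theorem to a product policy, followed by conditioning to collapse the global $Q$-function into the per-agent one defined in \eqref{eqn_q_others}.

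First, I would apply the standard policy gradient theorem to the joint policy $\pi_\theta$. Since $\mathcal{L}(\pi_\theta,\lambda)$ is the long-run average of the scalar reward $r_\lambda$ under $\pi_\theta$, and $Q_\lambda^{\pi_\theta}$ is the associated differential state-action value function, the classical result (for example, Theorem 1 in \cite{sutton2000policy} in its average-reward form) yields
$$\nabla_{\theta_n}\mathcal{L}(\pi_\theta,\lambda) = \mathbb{E}_{(S,A)\sim \rho_{\theta,\lambda}}\!\left[\nabla_{\theta_n}\log \pi_\theta(A\mid S)\,Q_\lambda^{\pi_\theta}(S,A)\right].$$
This is the only non-elementary ingredient, so I would explicitly invoke it and move on.

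Second, I would exploit the product structure of the policy. Since $\log \pi_\theta(A\mid S)=\sum_{k=1}^{N}\log \pi_{\theta_k}(A_k\mid S_k,\lambda)$ and only the $k=n$ summand depends on $\theta_n$, one has $\nabla_{\theta_n}\log \pi_\theta(A\mid S)=\nabla_{\theta_n}\log \pi_{\theta_n}(A_n\mid S_n,\lambda)$, which is already a purely local quantity in $(S_n,A_n)$. Plugging this into the previous display gives an expectation of a local score times a global $Q$.

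Third, I would apply the tower property, conditioning on $(S_n,A_n)$. Because the score function depends only on $(S_n,A_n)$, it can be pulled out of the inner conditional expectation, leaving
$$\mathbb{E}\!\left[\nabla_{\theta_n}\log \pi_{\theta_n}(A_n\mid S_n,\lambda)\,\mathbb{E}\!\left[Q_\lambda^{\pi_\theta}(S,A)\,\big|\,S_n,A_n\right]\right].$$
By definition \eqref{eqn_q_others}, the inner conditional expectation equals $Q_{n,\lambda}^{\pi_\theta}(S_n,A_n)$. Marginalizing $(S,A)\sim \rho_{\theta,\lambda}$ onto its $n$-th coordinate recovers the sampling law stated in the proposition, which concludes the argument.

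The main obstacle is not the algebra but the appeal to the average-reward policy gradient theorem itself: making it rigorous requires an ergodicity/unichain assumption on the Markov chain induced by $\pi_\theta$ so that $\rho_{\theta,\lambda}$ exists and the differential value function $Q_\lambda^{\pi_\theta}$ is well-defined and unique up to a constant. Once that is assumed (as is implicit in the definitions given), the rest of the derivation is a one-line application of the product-policy score identity and the tower property.
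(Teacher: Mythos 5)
Your proposal is correct and follows essentially the same route as the paper's own proof: invoke the policy gradient theorem of \cite{sutton2000policy}, use the product structure of the joint policy to reduce the score to the local term $\nabla_{\theta_n}\log\pi_{\theta_n}(A_n\mid S_n,\lambda)$, and then apply the law of total expectation conditioning on $(S_n,A_n)$ to replace $Q_{\lambda}^{\pi_\theta}$ by $Q_{n,\lambda}^{\pi_\theta}$ via \eqref{eqn_q_others}. Your closing remark about needing ergodicity for $\rho_{\theta,\lambda}$ and the differential value function to be well-defined is a fair observation that the paper leaves implicit.
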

Leveraging on the local gradients in Proposition \ref{prop_gradients}, we  optimize a  parametric version of \eqref{eqn_optimal_set}
%
\begin{align}\label{eqn_optimal_trained_policy}
&\left(\pi_{\theta_1},\ldots,\pi_{\theta_N}\right) = \argmax_{\pi_{\theta}}  \mathbb E_{S_t, A_t\sim \pi_\theta} \left [ \frac{1}{T_0}\sum_{t=0}^{T_0-1} r_{\lambda}(S_t,A_t) \right ].
\end{align}
were the optimization variable was written as $\pi_\theta=\left( \pi_{\theta_1},\ldots,\pi_{\theta_N}\right)$ to reduce notation. Notice that, in addition to the parametric expansion of the policies, we introduced a second theoretical variation by removing the limit and adhering to a finite time horizon. This modification is key for the practical implementation of the algorithm. For convergence analysis, we still link it to our original formulation in \eqref{eqn_crl} using the following assumption.
\begin{assumption}\label{assumption_consistency}\textbf{Estimation consistency.}
The truncating error in \eqref{eqn_consistency} satisfies  $\lim_{T_0\to\infty} \left \|e_{T_0} \right\| =0.$ %
\begin{equation}\label{eqn_consistency}
    \mathbb{E}\left[\frac{1}{T_0}   \sum_{t=k T_0}^{(k+1)T_0}  \left(r(S_t,A_t)-c\right) \right]=\mathbb{E}\left[\lim_{T\to\infty}\frac{1}{T}  \textcolor{black}{\sum_{t=0}^{T-1} } \left(r(S_t,A_t)-c\right) \right] +\textcolor{black}{ e_{T_0}}.
\end{equation}
\end{assumption}

\noindent{\textbf{Remark}} Training for \eqref{eqn_optimal_trained_policy} requires that all agents interact  to learn from  rewards \eqref{eqn_reward_lambda} that are jointly activated  via \eqref{eqn_patrolling_reward}. This training phase is designed to run offline, with multiple agents and a common $\lambda$ drawn randomly at the start of an episode and fixed until the end of it.  Once each policy $\pi_{\theta_n}(A_{tn},\mid S_{tn},\lambda)$ in \eqref{eqn_optimal_trained_policy} has been trained, its online execution depends only on the local state $S_{tn}$ of the agent, and a common vector $\lambda$ which varies as agents enter and exit the zones. Thus, the online coordination between agents depends on sharing the multipliers $\lambda$ only, as described next.

\section{Distributed online execution}\label{sec_online_execution}
Aiming for a practical online implementation of a feasible policy for \eqref{eqn_crl}, we introduced a parametric training strategy in \eqref{eqn_optimal_trained_policy} and truncated the episode.  There are still two practical issues to be addressed before running the policies \eqref{eqn_optimal_trained_policy} in the execution phase. These are computing the expectation over the transition dynamics in the dual update, and assuming that all agents can see the same  $\lambda^{k}$.

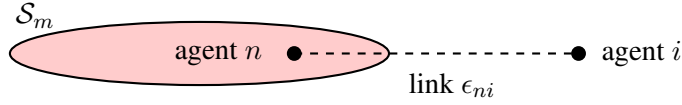
\begin{figure}
    \centering
    \begin{tikzpicture}[scale=0.42]

\filldraw[color=black, fill=red!20, thick](0,0) ellipse (6cm and 1cm);
\filldraw[color=black, fill=black, thick](3,0) circle (0.2);
\filldraw[color=black, fill=black, thick](12,0) circle (0.2);
\draw[color=black, thick, dashed] (3,0) -- (12,0);

\draw node  at (-5.2,1.2) {$\mathcal S_m$};
\draw node  at (0.6,0) {agent $n$};
\draw node  at (14,0) {agent $i$};
\draw node  at (8,-1) {link $\epsilon_{ni}$};

\end{tikzpicture}
    \vspace{-0.5cm}
    \caption{ \small Gossip protocol between two agents $n$ and $i$ communicating through the link $\epsilon_{ni}$. Both agents aim to know if the shaded zone $\mathcal S_m$  is occupied at time $\tau=0$, as defined by   $\max\{\mathds 1[S_{\tau i}\in \mathcal S_m],\mathds 1[S_{\tau n}\in \mathcal S_m] \}$. Since agent $n$ is in $\mathcal S_m$ at time $t=0$, it knows the actual reward. Hence $\hat R_{1,\tau,m,t}=1$ for all $t\geq 0$. Instead, agent $i$ is outside $\mathcal S_m$ at $t=0$ so that $R_{i,\tau,m,0}=0$ is incorrect, but is corrected at time $t=1$ by gossiping from agent $n$, so that $R_{i,\tau,m,t}=1$  for all $t\geq 1$. \vspace{-0.8cm}}
    \label{fig:gossip}
\end{figure}

We resort to a stochastic approximation of the gradient to address the first issue. As introduced in Section \ref{sec_lagrangian}, a dual gradient descent update over the  Lagrange multipliers for problem \eqref{eqn_crl} amounts to
%
$\lambda_m^{k+1} =\left[\lambda_m^{k} - \eta \lim_{T\to \infty}\frac{1}{T} \mathbb E_{S_t, A_t\sim \pi} \left [ \sum_{t=0}^{T-1}  \left(r_m(S_t,A_t)-c_m\right) \right ]\right]_+. $
%
For a data-driven implementation, we run stochastic gradient descent   \cite{dvoretsky1955stochastic}, dropping the expectation and replacing it with reward samples. As we did for training, we also set a finite horizon $T_0$,
\begin{align}\label{eqn_stochastic_dual}
\hat \lambda_m^{k+1} &=\left[\hat \lambda_m^{k} - \frac{\eta}{T_0}   \sum_{t=k T_0}^{(k+1)T_0-1}  \left(r_m(S_t,A_t)-c_m\right) \right]_+.
\end{align}
The trade-off for choosing the roll-out horizon  $T_0$ entails keeping the induced error in \eqref{eqn_consistency} sufficiently small while ensuring that the multipliers are updated sufficiently often.

We propose a gossiping protocol to solve the second issue.
Notice that agents must know the global rewards $r_m(S_t,A_t)$ to compute the update in \eqref{eqn_stochastic_dual}.
The following analysis aims for a distributed implementation of \eqref{eqn_stochastic_dual} in which each agent updates a local copy of $\lambda^{(k)}$ using their own views of the global rewards, which are agreed with the team through local exchanges with neighboring agents over a communication network.
To formalize these ideas, consider agents $n\in \mathcal V=\{1,\ldots,N\}$ as nodes of a \emph{connected} graph $\mathcal G(\mathcal V,\mathcal E)$ with communication links $\mathcal E$ such that $e_{n,i}\in \mathcal E $ if agents $n$ and $i$ are within direct communication range. Define the neighborhood $\mathcal N_n=\{i\in \mathcal V: e_{n,i}\in E\}$  of agents communicating with  $n$,
%
 the distance $d^\star_{n,i}
$ as the minimum number of hops required to communicate two remote nodes $n$ and $i$, and
 the graph diameter $d(\mathcal G):=\max_{n,m\in V}d^\star_{n,m}$  as the maximum distance between nodes across the graph.

The global occupation  of zone  $\mathcal S_m$ at time $\tau$ takes a binary value $r_m({S_\tau,A_\tau})\in\{0,1\}$.   Agent $n$ computes a sequence of local estimates  of $r_m({S_\tau,A_\tau})$ for times $t\geq \tau$.
As shown in  Fig. \ref{fig:gossip}, these estimates $\hat R_{m,\tau,n,t}$ are initialized by agent $n$ at time $t=\tau$  using its local occupancy index; i.e.,
   \textcolor{black}{ $\hat R_{m,\tau,n,\tau}=\mathds 1[S_{\tau n}\in \mathcal S_m]$,}
 and then agreed with the neighbor agents $i\in \mathcal N_n$ including $n$ itself via $\hat R_{m,\tau,n,t}=\max_{i\in\mathcal N_n\cup \{n\}} \textcolor{black}{\hat R_{m,\tau,n,t-1}}$ at each subsequent time $t>\tau$.
Through this gossip algorithm, binary data $\mathds 1[S_{tn}\in \mathcal S_m]$ percolates across the network via local exchanges, reaching consensus within a time interval no longer the network diameter, as shown next

\begin{lemma}\label{lemma_gossip}
The gossip estimate $\textcolor{black}{\hat R_{m,\tau,n,t}}$ underestimates $r_m(S_\tau,A_\tau)$; i.e., $\textcolor{black}{\hat R_{m,\tau,n,t}}\leq r_m(S_\tau,A_\tau)$ with equality if the gossiping time $t$ is large enough so that remote data arrive through local exchanges across the network; i.e.,
$\textcolor{black}{\hat R_{m,\tau,n,t}}= r_m(S_\tau,A_\tau),\ \text{ for all }n\in V \text{ and }t\geq \tau+ d(\mathcal G).$
\end{lemma}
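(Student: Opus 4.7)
The plan is a two-part argument: first establish the upper bound $\hat R_{m,\tau,n,t}\leq r_m(S_\tau,A_\tau)$ by induction on $t$, then establish equality by tracking how a value of $1$ percolates through the graph in at most $d(\mathcal{G})$ hops. A useful preliminary observation, immediate from the update rule $\hat R_{m,\tau,n,t}=\max_{i\in\mathcal{N}_n\cup\{n\}}\hat R_{m,\tau,i,t-1}$, is that the estimates are monotone non-decreasing in $t$, because $n\in\mathcal{N}_n\cup\{n\}$.

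For the upper bound, I would induct on $t\geq\tau$. At $t=\tau$, $\hat R_{m,\tau,n,\tau}=\mathds{1}[S_{\tau n}\in\mathcal{S}_m]\leq \max_{i}\mathds{1}[S_{\tau i}\in\mathcal{S}_m]=r_m(S_\tau,A_\tau)$ by definition \eqref{eqn_patrolling_reward}. Assuming $\hat R_{m,\tau,i,t-1}\leq r_m(S_\tau,A_\tau)$ for every $i$, taking the maximum over $\mathcal{N}_n\cup\{n\}$ preserves the bound, which closes the induction.

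For equality at $t\geq\tau+d(\mathcal{G})$, I would split on the value of $r_m(S_\tau,A_\tau)\in\{0,1\}$. If $r_m(S_\tau,A_\tau)=0$ then every initial value is $0$ and, by the upper bound, every $\hat R_{m,\tau,n,t}$ stays at $0$. If $r_m(S_\tau,A_\tau)=1$, there exists an agent $i^\star$ with $S_{\tau i^\star}\in\mathcal{S}_m$, so $\hat R_{m,\tau,i^\star,\tau}=1$. I would then show by induction on $k$ that every agent $n$ with graph distance $d^\star_{n,i^\star}\leq k$ satisfies $\hat R_{m,\tau,n,\tau+k}=1$: the base case $k=0$ is $i^\star$ itself, and for the inductive step any $n$ at distance $k$ has a neighbor $j$ at distance $k-1$, so by the inductive hypothesis and monotonicity $\hat R_{m,\tau,n,\tau+k}\geq \hat R_{m,\tau,j,\tau+k-1}=1$, which combined with the upper bound forces equality. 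Taking $k=d(\mathcal{G})$ covers every node in $\mathcal{V}$, and monotonicity extends the conclusion to all $t\geq \tau+d(\mathcal{G})$.

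I do not expect a true obstacle here; the argument is essentially a combinatorial induction exploiting the binary nature of the rewards and the idempotence of $\max$. The only subtlety is being careful that the update propagates strictly along graph edges, which is why the diameter $d(\mathcal{G})$, rather than $|\mathcal{V}|$, suffices as the consensus horizon.
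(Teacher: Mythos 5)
Your proof is correct and follows essentially the same route as the paper's: a case split on whether $r_m(S_\tau,A_\tau)$ is $0$ or $1$, followed by an induction on hop count showing the value $1$ percolates to all $h$-hop neighbors, with the diameter bounding the consensus time. Your write-up is somewhat more complete than the paper's (which omits the explicit monotonicity observation and the induction establishing the underestimate $\hat R_{m,\tau,n,t}\leq r_m(S_\tau,A_\tau)$), but the underlying argument is the same.
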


This lemma is to be used for convergence analysis, but also for algorithm construction, since it confirms that the estimates $\hat R_{m,\tau,n,t}$ do not need to be updated beyond $t\geq \tau+d$, where $d\geq d(\mathcal{G})$ is any overestimate of the network diameter. Thus, putting   $t=\tau$, $t\in\{\tau+1,\tau+d-1\}$ and $t>\tau+d$ together,  we update the distributed reward estimates as
\begin{align}\label{eqn_gossip_r}
    	\hat R_{m,\tau,n,t}&=\begin{cases}
    	\hat R_{m,\tau,n,t-1},&\quad \tau=(k-1)T_0,\ldots,t-d-1 \\
    	\textcolor{black}{\max_{p\in\mathcal N_n\cup \{n\}} \hat R_{m,\tau,p,t-1}},&\quad \tau=t-d,\ldots,t-1\\
    	\mathds 1\{S_{nt}\in\mathcal S_m\},&\quad \tau=t
    	\end{cases}
	\end{align}
We can substitute them into the dual update \eqref{eqn_stochastic_dual} to obtain a distributed algorithm for the multipliers.
\begin{algorithm}[t]
\For{$k=0,1,\ldots$}{
  Set local policy $\pi_n=\pi_\theta(\lambda^{k}_n)$ as in \eqref{eqn_optimal_trained_policy} pre-trained offline using local gradients.\;\\
  \For{$t=k T_0:(k+1)T_0-1$}{
  	Take action, transition, collect local rewards $\mathds{1}\{S_{tn}\in\mathcal S_m\}$ \\\;
  	\emph{Network gossiping:} Update $\hat R_{m,\tau,n,t}$ according to \eqref{eqn_gossip_r}
	  }
   Compute the stochastic dual gradients
   $$\textcolor{black}{g_{nmk}=\frac{1}{T_0}   \sum_{\tau=(k-1) T_0}^{ kT_0-1}  \left(\hat R_{m,\tau,n,(k+1) T_0} -c_m\right)},\quad \textcolor{black}{g_{nm(k+1)}=\frac{\eta}{T_0}   \sum_{\tau=kT_0}^{(k+1)T_0-1}  \left(\hat R_{m,\tau,n,(k+1) T_0} -c_m\right)}$$
  Update the multipliers $\lambda^{k}_{m,n}=\left[\lambda^{k-1}_{m,n} -\eta \textcolor{black}{g_{nmk}}
   \right]_+,\quad \lambda^{k+1}_{m,n}=\left[\lambda^{k}_{m,n}
  -\eta \textcolor{black}{g_{nm(k+1)}} \right]_+$
}
\caption{Distributed multi-agent dual update and policy execution}
\label{algo:alg_main_algorithm}
\end{algorithm}
Algorithm \ref{algo:alg_main_algorithm} not only indicates how to update the local estimates $\lambda_{m,n}$ via network gossiping of the rewards but also gives each agent the procedure to run its fully distributed policy  $\pi_\theta(\lambda^{k}_n)$ in the execution phase.
The reason to keep two copies of $\lambda_{m,n}$ by Algorithm \ref{algo:alg_main_algorithm} is to wait until agents reach consensus on the rewards via Lemma \ref{lemma_gossip}; see the proof of Proposition \ref{prop_gossip} for more details.  In the proof of Lemma \ref{lemma_gossip} is shown that $\hat \lambda_{m,n,t}^{k}$ reaches steady state  and can be discarded once $t\geq t_{k+1}$ since, according to \eqref{eqn_distributed_dual_simple}, it is not needed  for recalculating  $\hat \lambda_{m,n}^{k+1}$ any more.

Next, we show that running the gossiping and distributed policies ensures that the constraints in \eqref{eqn_crl} are satisfied almost surely.  Building up to this main result, we first characterize the multipliers mismatch, and then add two working assumptions.

\begin{proposition}\label{prop_gossip}
Let $T_{k+1}:=T_0 (k+1)$ the time at the end of rollout $k+1$. Assume that $T_0\geq d(\mathcal G)$. 
Then, the error on the multipliers is bounded by
 $   \lambda_{m}^{k+1}-\lambda_{m,n}^{k+1}\leq \frac{\eta}{T_0}d(\mathcal G)$
\end{proposition}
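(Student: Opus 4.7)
My plan is to combine an inductive synchronization argument with the non-expansiveness of the projection $[\cdot]_+$, invoking Lemma~\ref{lemma_gossip} twice: once to argue that the ``lagging'' copy of the multiplier is exact, and once to bound the residual mismatch produced only by the most recent rollout.

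First I would establish by induction on $k$ that, at the instant Algorithm~\ref{algo:alg_main_algorithm} uses $\lambda_{m,n}^{k}$ to produce $\lambda_{m,n}^{k+1}$, it coincides with the centralized iterate $\lambda_{m}^{k}$. The base case is immediate. For the inductive step, observe that at time $T_{k+1}=(k+1)T_0$ the algorithm recomputes $\lambda_{m,n}^{k}$ using the stochastic gradient $g_{nmk}$, whose reward estimates $\hat R_{m,\tau,n,T_{k+1}}$ are indexed by $\tau\in\{(k-1)T_0,\ldots,kT_0-1\}$. Since $T_{k+1}-\tau\geq T_0\geq d(\mathcal G)$, Lemma~\ref{lemma_gossip} yields $\hat R_{m,\tau,n,T_{k+1}}=r_m(S_\tau,A_\tau)$ for all such $\tau$. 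Hence $g_{nmk}$ matches the centralized stochastic gradient, and the projections coincide, giving $\lambda_{m,n}^{k}=\lambda_m^{k}$.

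With this synchronization, the only remaining source of discrepancy comes from the current rollout $\tau\in\{kT_0,\ldots,T_{k+1}-1\}$. For those $\tau\leq T_{k+1}-d(\mathcal G)$ Lemma~\ref{lemma_gossip} again gives $\hat R_{m,\tau,n,T_{k+1}}=r_m(S_\tau,A_\tau)$, whereas for the (at most) $d(\mathcal G)$ remaining indices the lemma only guarantees $\hat R_{m,\tau,n,T_{k+1}}\leq r_m(S_\tau,A_\tau)$, with each difference in $\{0,1\}$ by the binary nature of the rewards. Writing the two updates side by side and invoking the $1$-Lipschitz property of $[\cdot]_+$ bounds $|\lambda_m^{k+1}-\lambda_{m,n}^{k+1}|$ by $\frac{\eta}{T_0}$ times the sum of per-time reward mismatches, each at most $1$, over at most $d(\mathcal G)$ indices, which is exactly $\frac{\eta}{T_0}d(\mathcal G)$. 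In particular, because $\hat R\leq r_m$, the distributed update is componentwise no smaller than the centralized one, so the stated one-sided inequality follows and the complementary direction delivers the same bound.

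The main obstacle is the two-timescale bookkeeping: one has to carefully identify, for each instant at which a gradient is formed, which of the reward estimates have already reached network consensus and which have not. The algorithm's design of carrying two concurrent copies of the multipliers is precisely what lets Lemma~\ref{lemma_gossip} kick in before the lagging copy is used again, and confirming this alignment is the delicate step; once it is in place, the error bound reduces to the standard non-expansive projection argument applied to a gradient perturbation of length at most $d(\mathcal G)$.
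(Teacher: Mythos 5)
Your proposal is correct and follows essentially the same route as the paper's proof: use Lemma~\ref{lemma_gossip} to show that the lagging copy of the multiplier has synchronized exactly with the centralized iterate by the time it is reused (the paper phrases this as $\hat\lambda_{m,n,t}^{k}=\hat\lambda_m^{k}$ for $t\geq t_{k+1}=kT_0+d(\mathcal G)$, you phrase it as an induction on $k$), and then attribute the remaining discrepancy to the at most $d(\mathcal G)$ not-yet-gossiped binary rewards of the current rollout, each contributing at most $\eta/T_0$ through the non-expansive projection. Your additional observation that $\hat R\leq r_m$ makes the bound one-sided (the distributed iterate can only overestimate the centralized one) is a correct refinement that the paper leaves implicit.
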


\begin{assumption}\label{assumption_noforces}\textbf{No repelling forces}
\textcolor{black}{The underlying Markov decision process is such that, given $m$, there exists a feasible policy  $\pi_n$ satisfying $r_{m}(S_t,A_t)=1$ for all $t$.}
\end{assumption}

\textcolor{black}{The previous assumption implies that an agent
 can be stationed at a particular zone $S_{nt}\in \mathcal S_m$. This will be used in the proof of Lemma \ref{lemma:bounded_multipliers}.}

\begin{assumption}\label{assumption_representation}\textbf{Representation capacity}
The policy parameterization is dense enough; i.e.,
 $\exists\beta>0$, such that for each  $\pi(\lambda)\in \Pi(\lambda)$  there is a $ \pi_\theta(\lambda)$ satisfying
%
\textcolor{black}{$\left|\sum_{m=1}^M \lambda^T \left(V_m( \pi(\lambda))-V_m(\pi_\theta(\lambda))\right)\right|\leq \beta \|\lambda\|,$}
%
where $V_m(\pi(\lambda))$ and $V_m(\pi_\theta(\lambda))$ are the values resulting from $\pi(\lambda)$ and $\pi_\theta(\lambda)$, respectively.
\end{assumption}

\begin{theorem}\label{theorem:feasibility}
Let Assumptions \ref{assumption_consistency}--\ref{assumption_representation} hold. Consider specifications $c_{\max}=\max_{m=1,\ldots,M} c_m <1$,  and $\sum_{m=1}^Mc_m \leq N-1$ so that  $\delta_c=(1-c_{\max})/\sqrt{M}>0$. If  $\beta+ \frac{M}{T_0}d(\mathcal G)\eta +\epsilon_{T_0}+\eta/2 < \delta_C$,  the trajectories generated by Algorithm \ref{algo:alg_main_algorithm} are feasible with probability one, i.e., for all $m=1,\ldots,M$
\begin{equation}\label{eqn_as_feasibility}
    \liminf_{T\to \infty}\frac{1}{T} \sum_{t=0}^{T-1}  r_m(S_t,A_t)= c_m, \mbox{ a.s.}
    \end{equation}
\end{theorem}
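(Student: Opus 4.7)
The overall plan is to reduce the almost-sure feasibility claim to an almost-sure boundedness of the dual iterates. If the multipliers $\lambda_{m,n}^k$ produced by Algorithm \ref{algo:alg_main_algorithm} stay bounded almost surely, a standard telescoping of the projected-ascent recursion yields the desired long-run constraint satisfaction. Concretely, since the projection $[\cdot]_+$ can only increase its argument, one has $\lambda_{m,n}^{k+1}\geq \lambda_{m,n}^{k}-\eta\big(g_{nmk}\big)$ for each agent and constraint. Summing from $k=0$ to $K-1$, dividing by $\eta K$, and rearranging gives
\begin{equation*}
\frac{1}{K}\sum_{k=0}^{K-1}\frac{1}{T_0}\sum_{\tau=kT_0}^{(k+1)T_0-1}\hat R_{m,\tau,n,(k+1)T_0} \ \geq\  c_m-\frac{\lambda_{m,n}^{K}-\lambda_{m,n}^{0}}{\eta K},
\end{equation*}
and as $K\to\infty$ the right-hand side tends to $c_m$ provided $\lambda_{m,n}^{K}$ does not grow unboundedly. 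Lemma \ref{lemma_gossip} then lets one replace $\hat R$ by the true reward $r_m(S_\tau,A_\tau)$ at the cost of an additive error of order $d(\mathcal{G})/T_0$, while Assumption \ref{assumption_consistency} closes the gap between the roll-out average and the true $\liminf$ time average with error $\epsilon_{T_0}$. Proposition \ref{prop_gossip} absorbs the inter-agent mismatch $\lambda_m^{k+1}-\lambda_{m,n}^{k+1}\leq (\eta/T_0)d(\mathcal{G})$, and the representation gap $\beta$ of Assumption \ref{assumption_representation} absorbs the parametric suboptimality. The hypothesis $\beta+(M/T_0)d(\mathcal{G})\eta+\epsilon_{T_0}+\eta/2<\delta_c$ is exactly what is needed so that, after summing these errors across all $M$ constraints, they still leave strictly positive slack against the threshold $1-c_{\max}$.

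The main obstacle, and therefore the step I would attack first, is establishing the almost-sure boundedness of $\{\lambda_{m,n}^{k}\}$; this is what the alluded Lemma \ref{lemma:bounded_multipliers} is set up to deliver. The argument I would build is a drift/Lyapunov analysis on $\|\lambda^{k}\|$. Take the potential $\Phi_k=\tfrac{1}{2}\|\lambda^{k}\|^2$, and use the non-expansiveness of $[\cdot]_+$ to write
\begin{equation*}
\Phi_{k+1}\leq \Phi_k-\eta\,\lambda^{k\top}\big(V(\pi_\theta(\lambda^{k}))-c\big)+\frac{\eta^2}{2}\|V(\pi_\theta(\lambda^{k}))-c\|^2.
\end{equation*}
The squared term is bounded by $\eta^2 M/2$ because rewards lie in $[0,1]$. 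For the cross term, when $\|\lambda^{k}\|$ is large, Assumption \ref{assumption_representation} certifies that $\pi_\theta(\lambda^k)$ is $\beta\|\lambda^k\|$-close in the weighted value sense to the optimal regularized policy $\pi(\lambda^k)\in\Pi(\lambda^k)$. The latter, by Assumption \ref{assumption_noforces} together with the hypothesis $\sum_m c_m\leq N-1$ and $c_{\max}<1$, admits a feasible policy which serves the most-loaded zone and attains strict slack of at least $1-c_{\max}$ on the coordinate with maximum $\lambda_m$; projecting onto $\lambda/\|\lambda\|$ this yields $\lambda^{k\top}\big(V(\pi(\lambda^{k}))-c\big)\geq \delta_c\|\lambda^{k}\|$. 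Combining these two ingredients gives a negative drift of order $-\eta(\delta_c-\beta-\eta/2-\cdots)\|\lambda^{k}\|$ outside some compact set, which, via a Robbins–Siegmund / supermartingale argument on $\Phi_k$, implies $\sup_k\|\lambda^{k}\|<\infty$ almost surely. The gossip mismatch from Proposition \ref{prop_gossip} and the truncation error from Assumption \ref{assumption_consistency} are the residual noise terms absorbed by the margin in the theorem's hypothesis.

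With boundedness in hand, the conclusion is assembled in three short steps. First, telescoping as above and sending $K\to\infty$ yields $\liminf_{K}\tfrac{1}{K}\sum_{k}\tfrac{1}{T_0}\sum_{\tau}\hat R_{m,\tau,n,(k+1)T_0}\geq c_m$ almost surely. Second, Lemma \ref{lemma_gossip} and Proposition \ref{prop_gossip} replace the gossiped estimates by the true rewards $r_m(S_\tau,A_\tau)$ with vanishing error once $T_0\geq d(\mathcal{G})$ and the margin condition holds. Third, Assumption \ref{assumption_consistency} interchanges the $T_0$-block averages with the true long-run time average, delivering
\begin{equation*}
\liminf_{T\to\infty}\frac{1}{T}\sum_{t=0}^{T-1}r_m(S_t,A_t)\geq c_m\quad\text{a.s.},\qquad m=1,\ldots,M.
\end{equation*}
The reverse inequality that upgrades $\geq$ to equality follows from the hard capacity constraint $\sum_m r_m(S_t,A_t)\leq N$ combined with the slackness $\sum_m c_m\leq N-1$: if any constraint were strictly over-satisfied on a set of positive density, the sum bound would force another constraint to be violated, contradicting the $\liminf\geq c_{m'}$ just established for that other index. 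That final step is the one I would double-check carefully, because it depends on whether the authors truly claim equality or only feasibility in \eqref{eqn_as_feasibility}; the boundedness-plus-telescoping skeleton is robust to either reading.
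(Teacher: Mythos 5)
Your proposal follows essentially the same route as the paper: a negative-drift/supermartingale argument on $\|\lambda^k\|^2$ (this is exactly the content of the paper's Lemmas \ref{lemma:error_gradient}--\ref{lemma:martingale}, which combine the gradient-error bound, the $\delta_c$-slack from Lemma \ref{lemma:bounded_multipliers}, and non-expansiveness of the projection) to obtain almost-sure boundedness (tightness) of the multipliers, followed by telescoping the projected dual update $\hat\lambda^{k+1}_m \geq \hat\lambda^k_m - \frac{\eta}{T_0}\sum_t(r_m-c_m)$ to convert boundedness into the long-run constraint satisfaction. Your closing caveat is well founded: the paper's own argument only yields $\liminf_{T}\frac{1}{T}\sum_t r_m(S_t,A_t)\geq c_m$, so the equality in \eqref{eqn_as_feasibility} is not actually established there, and your proposed capacity-based upgrade (which tacitly assumes the zones $\mathcal S_m$ are disjoint so that $\sum_m r_m\leq N$) goes beyond what the paper proves.
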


The condition in  Theorem \ref{theorem:feasibility} can be satisfied by designing a sufficiently powerful parameterization or neural network, a sufficiently long time horizon, and then choosing a small step-size.

\section{Convergence analysis}\label{sec_convergence}

Define
$\hat g^{k}=\frac{1}{T_0}\sum_{\tau=(k-1)T_0}^{kT_0-1} \left(r(S_{\tau},A_{\tau})-c\right).$
%
 This vector contains the average constraint violation during execution between two updates of the Lagrange multipliers $\lambda^k$. In the following lemma, we establish that $\hat g^{k}$ is in expectation similar to the gradient of the dual function.

\begin{lemma}\label{lemma:error_gradient}
With $\mathcal F_k$ being the filtration for the process $\hat g^k,$ assumptions \ref{assumption_consistency}--\ref{assumption_representation} ensure that for every $\epsilon_{T_0}>0$, there exists a $T_0$ such that
 $  \textcolor{black}{(\lambda^k)^T \left(\nabla_\lambda d(\lambda^k)-\mathbb{E}_{\pi_\theta}\left[\hat g^k|\mathcal F_k\right]\right)\leq \|\lambda^k\|\left(\beta +  \frac{M}{T_0}d(\mathcal G)\eta+\epsilon_{T_0}\right).}$
\end{lemma}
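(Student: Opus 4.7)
The plan is to decompose the gap $\nabla_\lambda d(\lambda^k)-\mathbb{E}[\hat g^k|\mathcal F_k]$ into three pieces, each controlled by one already-established ingredient: the representation assumption, the gossip bound of Proposition~\ref{prop_gossip}, and the consistency assumption. By Danskin's theorem applied to $d(\lambda)=\max_\pi\mathcal{L}(\pi,\lambda)$, one has $\nabla_\lambda d(\lambda^k)=V(\pi(\lambda^k))-c$ where $V(\pi):=(V_1(\pi),\ldots,V_M(\pi))^\top$ and $\pi(\lambda^k)\in\Pi(\lambda^k)$. Denoting by $\pi_\theta(\lambda^k_{1:N})$ the product policy actually rolled out in Algorithm~\ref{algo:alg_main_algorithm} (agent $n$ uses its local copy $\lambda^k_n$), I would write
\[
\nabla_\lambda d(\lambda^k)-\mathbb{E}[\hat g^k|\mathcal F_k]
=\underbrace{V(\pi(\lambda^k))-V(\pi_\theta(\lambda^k))}_{(\mathrm i)}
+\underbrace{V(\pi_\theta(\lambda^k))-V(\pi_\theta(\lambda^k_{1:N}))}_{(\mathrm{ii})}
+\underbrace{V(\pi_\theta(\lambda^k_{1:N}))-\bigl(c+\mathbb{E}[\hat g^k|\mathcal F_k]\bigr)}_{(\mathrm{iii})},
\]
and bound each term after pairing with $\lambda^k$.

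Term $(\mathrm i)$ is the parameterisation gap and, paired with $\lambda^k$, Assumption~\ref{assumption_representation} gives directly $|(\lambda^k)^\top(\mathrm i)|\le\beta\|\lambda^k\|$. Term $(\mathrm{iii})$ is the truncation error: the quantity $c+\mathbb{E}[\hat g^k|\mathcal F_k]$ is the expected empirical average reward under the deployed product policy over a window of length $T_0$, so Assumption~\ref{assumption_consistency} yields $\|(\mathrm{iii})\|\le\|e_{T_0}\|\le\epsilon_{T_0}$ once $T_0$ is chosen large enough, and Cauchy--Schwarz gives $|(\lambda^k)^\top(\mathrm{iii})|\le\epsilon_{T_0}\|\lambda^k\|$. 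Term $(\mathrm{ii})$ is the gossip-induced gap: Proposition~\ref{prop_gossip} gives $\|\lambda^k-\lambda^k_n\|_\infty\le \eta d(\mathcal G)/T_0$ for every agent $n$, so invoking a Lipschitz sensitivity of the parameterised value function in $\lambda$ (standard for smooth policy classes such as softmax or Gaussian), each coordinate $|V_m(\pi_\theta(\lambda^k))-V_m(\pi_\theta(\lambda^k_{1:N}))|$ is of order $\eta d(\mathcal G)/T_0$. Summing across the $M$ coordinates before pairing with $\lambda^k$ produces $|(\lambda^k)^\top(\mathrm{ii})|\le \tfrac{M d(\mathcal G)\eta}{T_0}\|\lambda^k\|$. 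Adding the three contributions delivers the stated inequality.

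The main obstacle I expect is making step $(\mathrm{ii})$ quantitative: converting the per-agent multiplier mismatch of Proposition~\ref{prop_gossip} into a value-function gap of the announced order requires controlling how the occupancy measure $\rho_{\theta,\lambda}$ varies with $\lambda$. A clean route is either to establish an explicit Lipschitz constant for $\lambda\mapsto V(\pi_\theta(\lambda))$ from the smoothness of $\pi_\theta$ in $\lambda$ and boundedness of rewards, or to apply a performance-difference argument tailored to the augmented MDP of Section~\ref{sec_lagrangian} that decouples the contribution of each agent's local multiplier. Once this sensitivity is in place, $(\mathrm i)$ and $(\mathrm{iii})$ are essentially bookkeeping applications of the assumptions already at hand.
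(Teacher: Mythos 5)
Your three-term telescoping decomposition is essentially the same as the paper's: the proof in the text writes $(\lambda^k)^T(V-V_N+V_N-V_{T_0}+V_{T_0}-V_\theta)$ and charges one term each to Assumption~\ref{assumption_representation}, Proposition~\ref{prop_gossip}, and Assumption~\ref{assumption_consistency}, exactly the three ingredients you identify. The one substantive difference is \emph{where} you place the multiplier-mismatch term, and it matters for the step you yourself flag as the main obstacle. You compare $V(\pi_\theta(\lambda^k))$ with $V(\pi_\theta(\lambda^k_{1:N}))$, i.e.\ you need Lipschitz continuity of the \emph{parametrized} value map $\lambda\mapsto V(\pi_\theta(\lambda))$, which is not among the paper's assumptions and would require an extra smoothness hypothesis on the policy class. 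The paper instead inserts the mismatch between the two \emph{nonparametric optimal} values, $V(\pi(\lambda^k))$ versus $V(\pi(\lambda^k_{1:N}))$ with $\pi(\cdot)\in\Pi(\cdot)$, and obtains the Lipschitz constant structurally: by Assumption~\ref{assumption_noforces} and the argument of Lemma~\ref{lemma:bounded_multipliers}, the optimal policy is an assignment of agents to the top-weighted zones, so perturbing the multipliers only changes the value when their order changes, at which point the change in $\sum_m\lambda_m V_m$ equals the change in the multipliers --- a unit Lipschitz constant, accumulated over the $M$ zones to give the $\tfrac{M}{T_0}d(\mathcal G)\eta$ term. If you reorder your telescoping so that the local-copy substitution happens before the parameterization step, your argument closes using only the stated assumptions; as written, it leaves the Lipschitz sensitivity of $V(\pi_\theta(\cdot))$ as an unproved (and unassumed) ingredient.
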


\begin{proof}
\textcolor{black}{
Write $(\lambda^k)^T \left(\nabla_\lambda d(\lambda^k)-\mathbb{E}_{\pi_\theta}\left[\hat g^k|\mathcal F_k\right]\right) =
(\lambda^k)^T\left(V-V_N + V_N-V_{T_0}+V_{T_0}-V_\theta\right)$  starting from  $V=V(\pi_1(\lambda^k),\ldots,\pi_N(\lambda^k)))=\nabla_\lambda d(\lambda^k)$ with policies in \eqref{eqn_optimal_set}, then  adding and subtracting $V_N=V(\pi_1(\lambda_1^k),\ldots,\pi_N(\lambda_N^k)))$ such that the agents optimize using distributed local copies of $\lambda^k$,   and $V_{T_0}$ where additionally the objective is finite horizon.  The value   $V_{\theta}=\mathbb{E}_{\pi_\theta}\left[\hat g^k|\mathcal F_k\right] =V(\pi_{\theta_1}(\lambda_1^k),\ldots,\pi_{\theta_{N}}(\lambda_N^k))$ on the right involves the algorithmic policies such that the agents optimize parametric policies over a finite horizon with distributed copies of $\lambda^k$.  The proof follows from the bounds in Assumption \ref{assumption_representation}, Proposition
\ref{prop_gossip}, and Assumption \ref{assumption_consistency}. For the difference $(\lambda^k)^T(V-V_N)$ we need to accumulate the bounds in Proposition \ref{prop_gossip} per zone, and then use  Lipschitz continuity with respect to the multipliers, with unit constant. The continuity can be argued as in the proof of Lemma \ref{lemma:bounded_multipliers} using Assumption \ref{assumption_noforces} and noticing that when the multipliers change their order  the assignment is altered, and the difference in the weighted value is equal to the difference of the multipliers.   }
\end{proof}

\begin{lemma}\label{lemma:bounded_multipliers}
Assumption \ref{assumption_noforces} with $\delta_C$  as in Theorem \ref{theorem:feasibility} results in
 $ \sum_{i=1}^M\lambda_m (V_m-c_m)\geq \delta_c \|\lambda\|. $
 \end{lemma}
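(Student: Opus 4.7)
The plan is to produce a concrete feasible joint policy whose Lagrangian slack is already at least $\delta_c \|\lambda\|$, and then transfer the bound to $V_m = V_m(\pi(\lambda))$ by invoking the optimality that defines $\Pi(\lambda)$ in \eqref{eqn_optimal_set}. The key observation is that the Lagrangian pays more for the zone with the largest multiplier, so stationing one agent exactly there is a good policy whenever the multipliers are unbalanced, and by the $\ell_\infty$--$\ell_2$ inequality we have $\max_m \lambda_m \geq \|\lambda\|/\sqrt{M}$, which already delivers the $1/\sqrt{M}$ in $\delta_c$.

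First, let $m^\star \in \arg\max_m \lambda_m$. By Assumption~\ref{assumption_noforces}, there is a single-agent policy keeping agent $1$ inside $\mathcal{S}_{m^\star}$ for every $t$, so $r_{m^\star}(S_t,A_t)=1$ for all $t$ and therefore $V_{m^\star}(\tilde\pi)=1$ under any joint policy $\tilde\pi$ whose first component is this stationing policy. Using the budget $\sum_{m} c_m \le N-1$, the remaining demand $\sum_{m\neq m^\star} c_m \le N-1-c_{m^\star}$ can be split among the remaining $N-1$ agents: for each $m\neq m^\star$, allocate a fraction $c_m$ of the time of one agent to station it at $\mathcal{S}_m$ (again by Assumption~\ref{assumption_noforces}). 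This rotation schedule yields $V_m(\tilde\pi)\geq c_m$ for all $m\neq m^\star$.

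With this $\tilde\pi$ in hand, the bound is immediate:
\[
\sum_{m=1}^M \lambda_m\bigl(V_m(\tilde\pi)-c_m\bigr)
\;\geq\; \lambda_{m^\star}(1-c_{m^\star})
\;\geq\; \lambda_{m^\star}(1-c_{\max})
\;\geq\; \frac{1-c_{\max}}{\sqrt{M}}\,\|\lambda\|
\;=\; \delta_c\,\|\lambda\|.
\]
Since $\pi(\lambda)\in\Pi(\lambda)$ maximizes $\sum_m \lambda_m V_m(\pi)$ over the admissible product policies (definition \eqref{eqn_optimal_set}), the same quantity evaluated at $\pi(\lambda)$ is at least as large as at $\tilde\pi$, and subtracting $\sum_m \lambda_m c_m$ from both sides preserves the inequality, yielding $\sum_m \lambda_m(V_m-c_m)\geq \delta_c\|\lambda\|$ for $V_m=V_m(\pi(\lambda))$.

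The main obstacle I expect is the formal justification of the second step: Assumption~\ref{assumption_noforces} only asserts the per-zone existence of a stationing policy, so combining these into a single joint policy that simultaneously covers the residual demand needs a brief mixing/time-sharing argument (e.g., a periodic rotation among zones, or a convex combination of stationing policies with ergodic averages $c_m$). This is exactly the place where the sufficient condition $\sum_m c_m \le N-1$ is consumed, and it connects to the remark in the proof of Lemma~\ref{lemma:error_gradient} that changing the order of the multipliers reassigns the stationed agent and shifts the weighted value by a term equal to the multiplier gap, which is the Lipschitz-with-unit-constant property used there.
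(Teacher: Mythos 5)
Your overall strategy---exhibit a concrete policy built from the stationing policies of Assumption~\ref{assumption_noforces}, lower-bound its Lagrangian by $(1-c_{\max})\lambda_{m^\star}$, pass from $\|\lambda\|_\infty$ to $\|\lambda\|/\sqrt{M}$, and transfer the bound to $\pi(\lambda)$ via the optimality defining $\Pi(\lambda)$---is exactly the paper's, and your first and last steps are fine (your explicit appeal to optimality to justify evaluating at a suboptimal comparison policy is actually cleaner than the paper's phrasing, which simply asserts that the stationing policy ``optimizes the Lagrangian''). The difference lies in how the remaining $N-1$ agents are deployed, and that is where your proof has the hole you yourself flag. Your rotation schedule needs two things that Assumption~\ref{assumption_noforces} does not supply: (i) the ability of an agent to move between zones --- the assumption only guarantees, per zone, the existence of a policy that keeps an agent inside that zone, not reachability from one zone to another; and (ii) membership of the resulting time-shared policy in the class over which $\Pi(\lambda)$ is defined, namely stationary product policies $\prod_n \pi_n(A_{tn}\mid S_{tn})$ --- a periodic rotation is not a function of the current local state alone, so the comparison $\mathcal{L}(\pi(\lambda),\lambda)\geq\mathcal{L}(\tilde\pi,\lambda)$ is not licensed for your $\tilde\pi$.

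The paper sidesteps both issues: sort $\lambda_1\geq\cdots\geq\lambda_M$ and permanently station one agent at each of the $N$ zones with the largest multipliers, so that $V_m=1$ for $m\leq N$ and $V_m=0$ for $m>N$. The residual contribution is then $\sum_{m=2}^{N}\lambda_m(1-c_m)-\sum_{m=N+1}^{M}\lambda_m c_m\geq\lambda_N\bigl(N-1-\sum_{m=2}^{M}c_m\bigr)\geq 0$, because the nonnegative terms carry the larger multipliers, the negative terms carry the smaller ones, and $\sum_{m}c_m\leq N-1$. This consumes the budget hypothesis at exactly the same point your rotation does, but with a comparison policy that is stationary, deterministic, local, and covered by Assumption~\ref{assumption_noforces} as literally stated. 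If you replace your second step with this choice, the rest of your argument goes through verbatim.
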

\begin{proof}
Without loss of generality, let us sort the multipliers $\lambda_1\geq \lambda_2\geq\ldots\geq \lambda_M$. Then, one policy that optimizes the Lagrangian would sit one agent at each zone $m=1\ldots,N$. Therefore $V_m=1$ for  $m\leq N$ and $V_m=0$ for  $m>N$. Hence
%
 $   \sum_{m=1}^M \lambda_m  (V_m-c_m)= \lambda_1 (1-c_1)+\sum_{m=2}^N \lambda_m (1-c_m)+ \sum_{m=N+1}^M \lambda_m  (0-c_m)$ $\geq (1-c_{\max}) \lambda_1+ \sum_{m=2}^M \lambda_n (1-c_m)+ \sum_{m=N+1}^M \lambda_n  (0-c_m)$
    $= (1-c_{\max})\|\lambda\|_{\infty}+ \left(N-1-\sum_{m=2}^M c_m\right) \lambda_n$ $\geq (1-c_{\max})\|\lambda\|_{\infty}$ $\geq (1-c_{\max})\frac{\|\lambda\|}{\sqrt{M}}$.
\end{proof}

With this result, we establish conditions in the following lemma that the norm of the multipliers is a supermartingale. This will be instrumental in proving that the multipliers are bounded, and from there, that the constraints are satisfied.
\begin{lemma}\label{lemma:martingale}
 If   $\beta+ \frac{M}{T_0}d(\mathcal G)\eta +\epsilon_{T_0}+\eta/2 < \delta_C$ and $\left\|\lambda^{k} \right\|\geq 1$, then $\left\|\lambda^{k} \right\|^2$ is a supermartingale.
\end{lemma}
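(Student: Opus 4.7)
The plan is to bound the conditional second moment of $\lambda^{k+1}$ by exploiting non-expansiveness of the projection onto the non-negative orthant, then pairing the drift estimate with the two preceding lemmas. Starting from the dual update $\lambda^{k+1}=\left[\lambda^{k}-\eta\hat g^{k}\right]_+$, and using that projection onto a convex set containing the origin cannot increase norms, I would write
\begin{equation*}
\mathbb{E}\!\left[\left\|\lambda^{k+1}\right\|^2 \big|\, \mathcal F_k\right] \;\leq\; \left\|\lambda^{k}\right\|^2 \;-\; 2\eta\,(\lambda^k)^T\,\mathbb{E}\!\left[\hat g^{k}\big|\,\mathcal F_k\right] \;+\; \eta^2\,\mathbb{E}\!\left[\left\|\hat g^{k}\right\|^2\big|\,\mathcal F_k\right].
\end{equation*}

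Next, I would handle the cross term by chaining Lemma \ref{lemma:error_gradient} with Lemma \ref{lemma:bounded_multipliers}. Since $\nabla_\lambda d(\lambda^k)=V(\pi(\lambda^k))-c$, Lemma \ref{lemma:bounded_multipliers} yields $(\lambda^k)^T\nabla_\lambda d(\lambda^k)\geq \delta_C\|\lambda^k\|$, so
\begin{equation*}
(\lambda^k)^T\,\mathbb{E}\!\left[\hat g^{k}\big|\,\mathcal F_k\right] \;\geq\; \left(\delta_C-\beta-\tfrac{M}{T_0}d(\mathcal G)\eta-\epsilon_{T_0}\right)\left\|\lambda^{k}\right\|.
\end{equation*}
For the quadratic term, the binary structure of the rewards in \eqref{eqn_patrolling_reward} together with $c_m\in[0,1]$ gives $|\hat g_m^{k}|\leq 1$ for every $m$, so the second moment admits a uniform bound that combined with the hypothesis $\|\lambda^k\|\geq 1$ can be absorbed into a term no larger than $\eta\|\lambda^k\|$.

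Putting the two estimates together and factoring out $2\eta\|\lambda^k\|$, I obtain
\begin{equation*}
\mathbb{E}\!\left[\left\|\lambda^{k+1}\right\|^2 \big|\, \mathcal F_k\right] - \left\|\lambda^{k}\right\|^2 \;\leq\; -\,2\eta\left\|\lambda^{k}\right\|\left(\delta_C-\beta-\tfrac{M}{T_0}d(\mathcal G)\eta-\epsilon_{T_0}-\tfrac{\eta}{2}\right),
\end{equation*}
which is non-positive exactly under the stated hypothesis, establishing the supermartingale property.

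The main obstacle is the control of the quadratic term $\eta^2\mathbb{E}[\|\hat g^k\|^2\mid\mathcal F_k]$ with the precise constant $\eta/2$ that appears in the hypothesis: it requires trading a bound on $\mathbb{E}[\|\hat g^k\|^2]$ against the hypothesis $\|\lambda^k\|\geq 1$ carefully so that the extra factor becomes $\eta\|\lambda^k\|$ rather than a larger constant. Everything else reduces to assembling the earlier lemmas and applying Jensen's inequality through the conditional expectation.
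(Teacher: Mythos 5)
Your proposal follows essentially the same route as the paper's proof: non-expansiveness of the projection to expand $\|\lambda^{k+1}\|^2$, Cauchy--Schwarz with Lemma \ref{lemma:error_gradient} and Lemma \ref{lemma:bounded_multipliers} to lower-bound the cross term by $(\delta_C-\beta-\tfrac{M}{T_0}d(\mathcal G)\eta-\epsilon_{T_0})\|\lambda^k\|$, and the hypothesis $\|\lambda^k\|\geq 1$ to absorb the quadratic term into the $\eta/2$ slack. The only difference is that the paper simply asserts $\|\hat g^k\|^2\leq 1$ (whereas the componentwise bound $|\hat g^k_m|\leq 1$ only gives $\|\hat g^k\|^2\leq M$, a constant-factor issue your closing remark correctly flags but which affects the paper's argument equally); this is correct in substance and matches the paper's argument.
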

\begin{proof}
Using the update of the multipliers given in \eqref{eqn_stochastic_dual} and using the non-expansiveness of the projection, it follows that
$\Lambda^{k+1}\leq\|\lambda^k\|^2-2\eta (\lambda^{k})^T\hat g^{k}+\eta^2\|\hat g^{k}\|^2\leq \Lambda^k-2\eta(\lambda^k)^T\hat g^{k}+\eta^2.$
%
where we defined $\Lambda^{k}=\|\lambda^k\|^2$ and  used $\left\|\hat g^k \right\|^2 \leq 1$. 
    Therefore, to ensure that the process $\Lambda^k$ is a super-martingale, we need to prove   that
%
  $      \mathbb{E}\left[\left(\lambda^k\right)^T\hat g^k|\mathcal F_k\right]=\left(\lambda^k\right)^T \mathbb{E}\left[\hat g^k|\mathcal F_k\right]>\eta/2,$
where the expectation is taken with respect to the distributed policy \eqref{eqn_optimal_trained_policy} being executed.
%
%
To this end,  consider
    $\left(\lambda^k\right)^T\mathbb{E}\left[\hat g^k|\mathcal F_k\right]=\left(\lambda^k\right)^T \left(\nabla d (\lambda_k)+ \mathbb{E}\left[\hat g^k|\mathcal F_k\right]-\nabla d (\lambda_k)  \right)$
        $\geq \left(\lambda^k\right)^T\nabla d(\lambda_k)
    - \left(\beta+\frac{M}{T_0}d(\mathcal{G})\eta+\epsilon_{T_0}\right)\left\|\lambda^k\right\|,$
where the inequality follows from the Cauchy-Shwartz inequality and the bound in Lemma \ref{lemma:error_gradient}.
Combining the previous inequality along with the result of Lemma \ref{lemma:error_gradient} , and the hypothesis of this lemma, i.e.,  $\beta+\frac{M}{T_0}d(\mathcal G)\eta +\epsilon_{T_0}+\eta/2 < \delta_c$,  it follows for all $\lambda^k$
$    \left(\lambda^k\right)^T \mathbb E_{\pi^\star}\left[\hat g^k|\mathcal F_k\right]
\geq  \eta/2 \left\|\lambda^k\right\|\geq \eta/2$, where we use the hypothesis  $\left\|\lambda^k\right\|\geq 1$.
\end{proof}
\begin{proof}[\textbf{of Theorem \ref{theorem:feasibility}}] Due to space limitations we provide a sketch of the proof. Since $\Lambda^k= \left\|\lambda^k\right\|^2$ is a supermartingale, which is a generalization of a non-increased sequence, the multipliers must be bounded. More formally, the sequence $p(\lambda^k\mid \lambda^0)$ is \emph{tight}. That is, for any $\delta>0$, it exists a compact set $\mathcal{K}_\delta$ such that for every $k\geq 0$, $\mathbb{P}(\lambda^k\in \mathcal{K}_\delta)>1-\delta$.
For simplicity in the sketch of this proof, we will consider bounded multipliers instead of tight ones and show how this implies feasibility. For a similar argument on how $\Lambda^k$ being a supermartingale implies tightness and feasibility see Lemma 4 and Proposition 3 in \cite{calvo2023state}. Following the case of bounded multipliers, recall the dual descent update in \eqref{eqn_stochastic_dual}
%
$\hat \lambda_m^{k+1} =\left[\hat \lambda_m^{k} - \frac{\eta}{T_0}   \sum_{t=k T_0}^{(k+1)T_0}  \left(r_m(S_t,A_t)-c_m\right) \right]_+ \geq \hat \lambda_m^{k} - \frac{\eta}{T_0}   \sum_{t=k T_0}^{(k+1)T_0}  \left(r_m(S_t,A_t)-c_m\right). $
%
Applying this inequality recursively, it follows that
%
$\hat \lambda_m^{k+1} \geq \hat \lambda_m^{0} - \frac{\eta}{T_0}   \sum_{t=0}^{(k+1)T_0}  \left(r_m(S_t,A_t)-c_m\right). $
%
Taking the limsup in both sides of the previous inequality, and using that the multipliers are bounded it implies that  
%
%
 $   \limsup_{k\to\infty} -\sum_{t=0}^{(k+1)T_0}  \left(r_m(S_t,A_t)-c_m\right)$ $= L< \infty.$
%
Defining $T=({k+1})T_0$ the previous limit is equivalent to the desired \eqref{eqn_as_feasibility}.
%
%
    \end{proof}
\section{Numerical experiments}\label{sec_numerical}
\begin{figure}[t]
\floatconts
{fig:numericals}
{
\caption{\small
Simulation results obtained after executing for $200{,}000$ iterations, a two-agent policy trained to monitor the regions shown in (e), with requirements $c=[0.3,0.3,0.3,0.3]$. Colors in (a)--(c) correspond to the matching colored region in (e). A subset policy for $\lambda=[5,2.5,0,5]$ and $100$ steps of the runtime trajectory corresponding to $k=(128{,}00,128{,}100)$ are shown in (e), where the communication range of agents is shown by dashed lines.
\vspace{-0.5cm}}
}
{
  \addtolength{\tabcolsep}{-5pt}
  \begin{tabular}{p{0.32\textwidth}p{0.335\textwidth}p{0.375\textwidth}}
    \subfigure[\footnotesize Dual variables (Agent 1)]
    {
    \label{fig:numericals_dual1}
    \begin{tikzpicture}
	\pgfplotsset{grid style={dashed,gray!50}}
	\begin{axis}[
		tick scale binop=\times,
		scaled x ticks={base 10:-5},
		every x tick scale label/.style={anchor=south west,	xshift=21ex, yshift=-1.25ex},
        ticklabel style = {font=\tiny},
        label style={font=\tiny},
		minor y tick num=1,
		minor x tick num=4,
		xlabel={Iteration ($k$)},
		xlabel style={yshift=7.5pt},
		width=0.4\textwidth,
		height=0.2\textwidth,
		xmin=1,xmax=200000,ymin=2,ymax=8,
		grid=both,
	]

	\addplot [red, mark=none]
		table[x index=0,y index=1, col sep=comma]{dataDualVariableAgent1.dat};

	\addplot [blue, mark=none]
		table[x index=0,y index=2, col sep=comma]{dataDualVariableAgent1.dat};

	\addplot [green, mark=none]
		table[x index=0,y index=3, col sep=comma]{dataDualVariableAgent1.dat};

	\addplot [orange, mark=none]
		table[x index=0,y index=4, col sep=comma]{dataDualVariableAgent1.dat};

%
%
%

\end{axis}
\end{tikzpicture}
    }
    \subfigure[\footnotesize Dual variables (Agent 2)]{
    \label{fig:numericals_dual2}
    \begin{tikzpicture}
	\pgfplotsset{grid style={dashed,gray!50}}
	\begin{axis}[
		tick scale binop=\times,
		scaled x ticks={base 10:-5},
		every x tick scale label/.style={anchor=south west,	xshift=21ex, yshift=-1.25ex},
        ticklabel style = {font=\tiny},
        label style={font=\tiny},
		minor y tick num=1,
		minor x tick num=4,
		xlabel={Iteration ($k$)},
		xlabel style={yshift=7.5pt},
		width=0.4\textwidth,
		height=0.2\textwidth,
		xmin=1,xmax=200000,ymin=2,ymax=8,
		grid=both,
	]

	\addplot [red, mark=none]
		table[x index=0,y index=1, col sep=comma]{dataDualVariableAgent2.dat};

	\addplot [blue, mark=none]
		table[x index=0,y index=2, col sep=comma]{dataDualVariableAgent2.dat};

	\addplot [green, mark=none]
		table[x index=0,y index=3, col sep=comma]{dataDualVariableAgent2.dat};

	\addplot [orange, mark=none]
		table[x index=0,y index=4, col sep=comma]{dataDualVariableAgent2.dat};

%
%
%

\end{axis}
\end{tikzpicture}
    }
    &
    \subfigure[\footnotesize Constraint satisfaction]{
    \label{fig:numericals_constraint}
    \begin{tikzpicture}
	\pgfplotsset{grid style={dashed,gray!50}}
	\begin{axis}[
		tick scale binop=\times,
		scaled x ticks={base 10:-5},
		every x tick scale label/.style={anchor=south west,	xshift=21ex, yshift=-4.1ex},
        ticklabel style = {font=\tiny},
        label style={font=\tiny},
		minor y tick num=1,
		minor x tick num=4,
		xlabel={Iteration ($k$)},
		xlabel style={yshift=7.5pt},
		width=0.4\textwidth,
		height=0.2\textwidth,
		xmin=1,xmax=200000,ymin=0,ymax=0.6,
		grid=both,
	]

	\addplot [red, mark=none]
		table[x index=0,y index=1, col sep=comma]{dataVarV.dat};

	\addplot [blue, mark=none]
		table[x index=0,y index=2, col sep=comma]{dataVarV.dat};

	\addplot [green, mark=none]
		table[x index=0,y index=3, col sep=comma]{dataVarV.dat};

	\addplot [orange, mark=none]
		table[x index=0,y index=4, col sep=comma]{dataVarV.dat};

	\addplot[black, dashed, line width=0.5pt] coordinates {(0,0.3) (200000,0.3)};

\end{axis}
\end{tikzpicture}
    }
    \subfigure[\footnotesize Communication]{
    \label{fig:numericals_communication}
    \begin{tikzpicture}
	\pgfplotsset{grid style={dashed,gray!50}}
	\begin{axis}[
		tick scale binop=\times,
		scaled x ticks={base 10:-5},
		every x tick scale label/.style={anchor=south west,	xshift=21ex, yshift=-4.1ex},
        ticklabel style = {font=\tiny},
        label style={font=\tiny},
		minor y tick num=1,
		minor x tick num=4,
		xlabel={Iteration ($k$)},
		xlabel style={yshift=7.5pt},
		width=0.4\textwidth,
		height=0.2\textwidth,
		xmin=1,xmax=200000,ymin=0,ymax=1,
		grid=both,
	]


	\addplot [color=blue!10,solid, thick,mark=*,mark size=0.3pt,mark options={blue,solid}]
		table[x index=0,y index=1, col sep=comma]{dataCommunication.dat};

	\addplot [color=blue,solid,mark=none]
		table[x index=0,y index=2, col sep=comma]{dataCommunication.dat};

\end{axis}
\end{tikzpicture}
    }
    &
    \vspace{1ex}
    \subfigure[\footnotesize Policies and example trajectories]{
    \label{fig:numericals_policies}
    \begin{tikzpicture}
	\begin{axis}[
		minor tick num=1,
		ylabel near ticks,
        xticklabel style = {font=\tiny},
        yticklabel style = {font=\tiny},
		width=0.38\columnwidth,
		height=0.38\columnwidth,
		xmin=0,xmax=10,ymin=0,ymax=10,
		grid=none,
	]

	\draw [opacity=0.7, draw=red!75,thick, fill=red!75] (axis cs:1.5,6.5) ellipse (10 and 10);
	\draw [opacity=0.7, draw=blue!75,thick, fill=blue!75] (axis cs:8.5,6.5) ellipse (10 and 10);
	\draw [opacity=0.7, draw=green!75,thick, fill=green!75] (axis cs:5,8.5) ellipse (10 and 10);
	\draw [opacity=0.7, draw=orange!75,thick, fill=orange!75] (axis cs:5,1.5) ellipse (10 and 10);

	\addplot[red!75, opacity=0.5,
	              point meta=sqrt(\thisrow{u}*\thisrow{u}+\thisrow{v}*\thisrow{v}),
				  quiver={
				  		u=\thisrow{u},
				  		v=\thisrow{v},
				  		scale arrows=0.5,
				  		every arrow/.append style={
				  			line width=0.3,
				  			-{Latex[scale length=0.5]}
				  		}
			    },
    			samples=21,
	] table[col sep=comma]{dataQuiverAgent1.dat};

	\addplot[blue!75, opacity=0.5,
	              point meta=sqrt(\thisrow{u}*\thisrow{u}+\thisrow{v}*\thisrow{v}),
				  quiver={
				  		u=\thisrow{u},
				  		v=\thisrow{v},
				  		scale arrows=0.5,
				  		every arrow/.append style={
				  			line width=0.3,
				  			-{Latex[scale length=0.5]}
				  		}
			    },
    			samples=21,
	] table[col sep=comma]{dataQuiverAgent2.dat};

	\addplot [color=red,mark=o,mark options={scale=0.75pt,solid,fill=none}] coordinates {(8.0153,7.2517)};
	\addplot [color=red,mark=o,mark options={scale=0.75pt,solid,fill=none}] coordinates {(5.096,8.7671)};
	\addplot [color=red!75,solid,mark=none]
		table[x index=0,y index=1, col sep=comma]{dataTrajectoryAgent1.dat};

	\addplot [color=blue,mark=o,mark options={scale=0.75pt,solid,fill=none}] coordinates {(1.9432,7.3193)};
	\addplot [color=blue,mark=o,mark options={scale=0.75pt,solid,fill=none}] coordinates {(8.3663,6.1626)};
	\addplot [color=blue!75,solid,mark=none]
		table[x index=0,y index=1, col sep=comma]{dataTrajectoryAgent2.dat};

	\draw [draw=red!75, dashed, fill=none] (axis cs:8.0153,7.2517) ellipse (25 and 25);

	\draw [draw=blue!75, dashed, fill=none] (axis cs:1.9432,7.3193) ellipse (25 and 25);

\end{axis}
\end{tikzpicture}
    }
  \end{tabular}
  \addtolength{\tabcolsep}{5pt}
  \vspace{-5ex}
}
\end{figure}
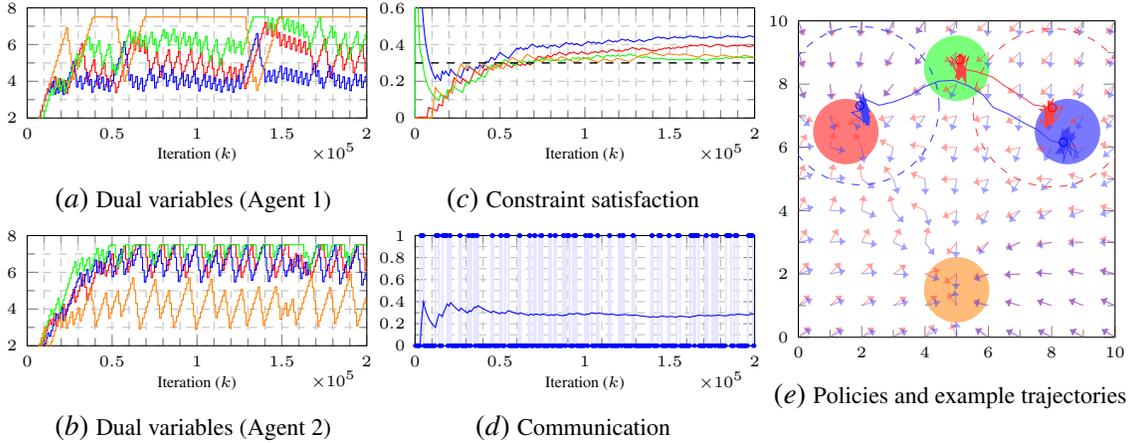
We consider a simulated experiment with $N=2$ agents monitoring the $M=4$ circular zones in Fig. \ref{fig:numericals_policies}.  The position  $S_{tn}$ of each agent  evolves as $S_{(t+1)n}=S_{tn}+T_s A_{tn}$ with $T_s=0.5$ within the area $\mathcal{S}=[0,10] \times [0,10]$ measured in squared meters. At time $t$, a reward $r_m(S_t,A_t)=1$ is obtained by the team if there is an agent inside the $m$-th region and $r_m(S_t,A_t)=0$ otherwise. By specification, each region must be visited at least $30\%$ of the time. As the requirements add up to more than $100\%$ of the time, a single agent cannot satisfy them by acting alone.
The agent policies are parameterized as linear combinations of Gaussian kernels and trained offline, running $150,000$ episodes of Actor-Critic updates using the local gradients in Proposition \ref{prop_gradients}.
Then, in the execution phase, the agents gossip their rewards within a communication range of $2.5$ meters, coordinating via Algorithm \ref{algo:alg_main_algorithm} during $200{,}000$ time steps, and updating the multipliers every $T_0=1{,}000$ steps.

\textcolor{black}{In this first experiment, we weaken the assumption of a connected graph, allowing for stochastic links.}  The agents asymptotically satisfy the requirements for all regions (Fig.~\ref{fig:numericals_constraint}), despite being within communication range only $30\%$ of the time (Fig.~\ref{fig:numericals_communication}). Coordination is achieved via the dual variables, which differ for each agent, alternating as they incorporate the rewards collected directly or via gossiping.
%
Fig.~\ref{fig:numericals_policies} shows the policies of the two agents as a function of the spatial variable for a fixed $\lambda=[5,2.5,0,5]$.  Both red and orange regions produce the same \textcolor{black}{weighted rewards} ($\lambda_1=\lambda_4=5$), yet the agents are directed to different regions, acquiring both rewards simultaneously. The trajectories of the agents are also shown between iterations $k=(128{,}000,128{,}100)$. Here, the red agent moves from the right region to the center region, while the blue agent moves from the left region to the right region. As this movement occurs, the agents enter the communication range, exchanging their stored rewards and updating their local dual variables (Fig.~\ref{fig:numericals_dual1}--\ref{fig:numericals_dual2}). This update causes the blue agent to avoid the middle region (occupied now by the red agent) and to continue to the region on the right, thus acquiring a higher reward as a team.

Next, we include two additional simulations as steps towards our future work, which aims to implement Algorithm \ref{algo:alg_main_algorithm} in a real-life environment with autonomous vehicles. In the first experiment, shown in Fig. \ref{fig:numericals_gazebo}, we tested the policy trained for the experiment in Fig. \ref{fig:numericals_policies} in a
Gazebo world, under simulated dynamics of two TurtleBot robots, including collision-avoidance. For that purpose, we coded the distributed Algorithm \ref{algo:alg_main_algorithm} as a ROS2 node communicating with the TurtleBot navigator. The second experiment in Fig. \ref{fig:numericals_floorplan} shows the trajectories of two agents
that use our multi-agent Algorithm \ref{algo:alg_main_algorithm} in an environment describing an actual floorplan, with an L-shaped corridor connecting three offices and three labs.  In this case, the agents had to address more complex dynamics and obstacles to reach the marked areas in this floorplan. But part of these dynamics is handled by a low-level SLAM and navigation controller capable of guiding each robot to a desired destination, avoiding
collisions on the way. Thus, our policies only need to assign the $M=3$ zones to the $N=2$ agents via actions in a finite space  $A_{tn}\in\{1,\ldots,M\}$. The next state observed by the policy is the position reached by the agent after navigating $T_s=1$s towards the goal. At this moment, the assignment may be changed by the policy, directing the car to an alternate zone. Fig.
\ref{fig:numericals_satisfaction} shows how constraints $c=[0.25,0.25,0.25,0.25]$ and $c=[0.5,0.4,0.3]$, which are unattainable by a single agent, can still be satisfied by our multi-agent assignment policies in these more realistic cases, reaching the specified time requirements for all the zones.
\begin{figure}[t]
\floatconts
{fig:gazebo}
{
\caption{\small (a) ROS2 Implementation of the multi-agent assignment Algorithm \ref{algo:alg_main_algorithm} in a Gazebo TurtleBot environment; (b) floorplan patrol with low-level navigation control; (c) constraint satisfaction as a function of the time step (in thousands) for the floorplan (top) and the TurtleBot (bottom).
\vspace{-0.5cm}}
}
{
  \addtolength{\tabcolsep}{-5pt}
  \begin{tabular}{p{0.27\textwidth}p{0.37\textwidth}p{0.4\textwidth}}
    \subfigure[\footnotesize Gazebo]{
    \label{fig:numericals_gazebo}
            \includegraphics[scale=0.195]{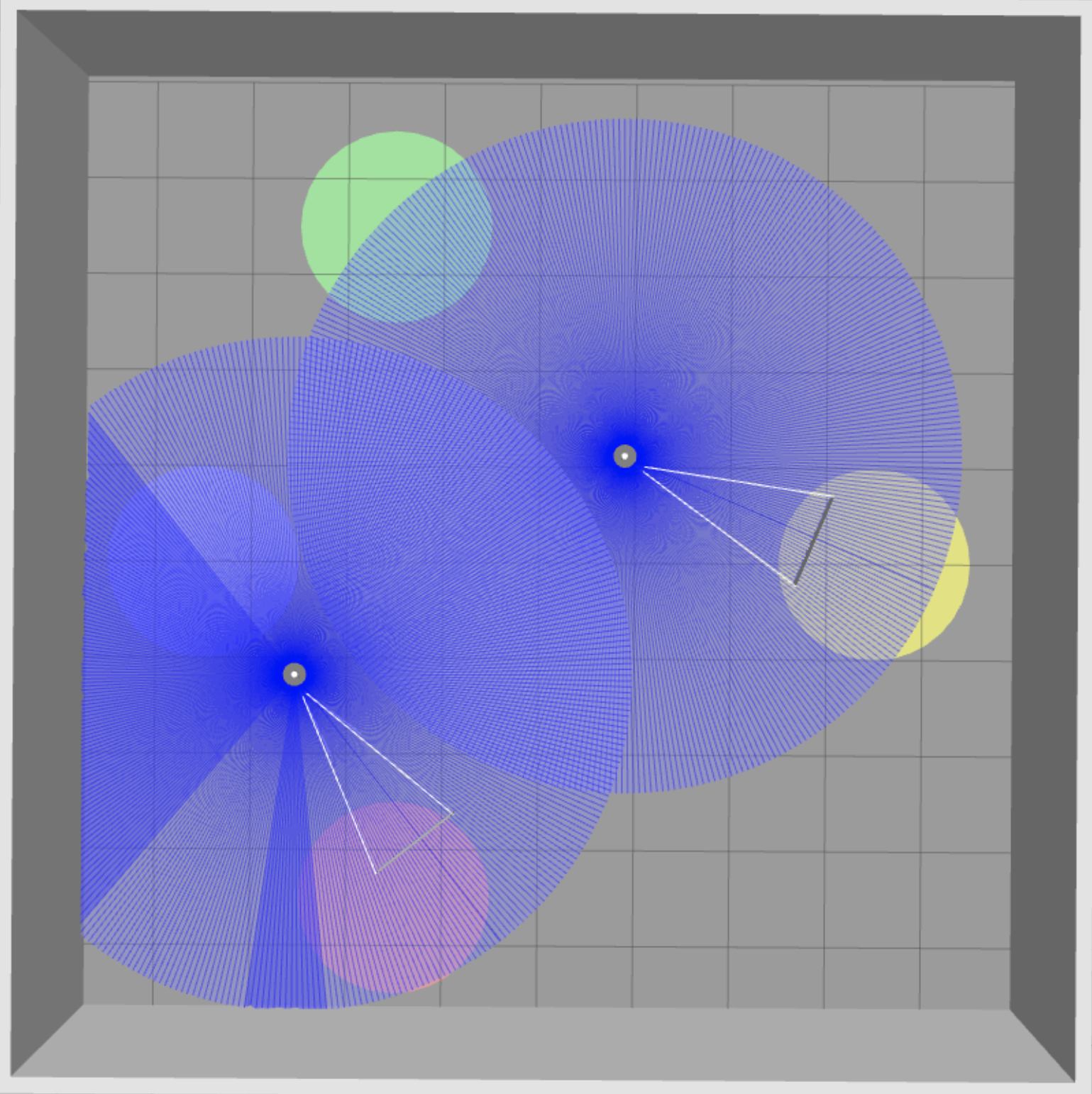}
     }
    &
     \subfigure[\footnotesize Policy for low-level navigation]{
    \label{fig:numericals_floorplan}
           \begin{tikzpicture}[scale=0.42]
\draw [draw=black,fill=black!20!white] (0,5.1) rectangle (10.9,6.05);
\draw [draw=black,fill=black!20!white] (10.9,0) rectangle (11.8,6.05);
\draw [draw=black!20!white,fill=black!20!white] (0,5.15) rectangle (11.5,6.0);
\draw [draw=black,fill=black!20!white] (11.8,2.6) rectangle (13.7,4.3);
\draw [draw=black,fill=black!20!white] (11.8,4.3) rectangle (13.7,6.05);
\draw [draw=black,fill=black!20!white] (11.8,0) rectangle (13.7,2.6);
\draw [draw=black!20!white,fill=black!20!white] (11.7,2.1) rectangle (11.9,2.5);

\draw [draw=white,fill=white] (0,6.1) rectangle (10,7.7);
\draw [draw=white,fill=white] (0,-1.7) rectangle (10,-0.1);

\draw [draw=black,fill=black!20!white] (0,0) rectangle (4.4,5.1);
\draw [draw=black!20!white,fill=black!20!white] (3.9,5) rectangle (4.3,5.2);

\draw [draw=black,fill=black!20!white] (4.4,0) rectangle (7.1,5.1);

\draw [draw=black,fill=black!20!white] (7.1,0) rectangle (10.9,5.1);
\draw [draw=black,fill=black!20!white] (10.9,0) rectangle (11.8,2);
\draw [draw=black!20!white,fill=black!20!white] (7.2,5) rectangle (7.6,5.2);
\draw [draw=black!20!white,fill=black!20!white] (10.8,0) rectangle (11,1.95);

\filldraw[color=black!60, fill=black!5, thick](2.7,4.2) circle (0.5);
\filldraw[color=black!60, fill=black!5, thick](9.2,4.2) circle (0.5);
\filldraw[color=black!60, fill=black!5, thick](12.9,1.3) circle (0.5);

 \pgfplotsset{grid style={dashed,gray!50} }
	\begin{axis}[
		ticks=none,
		width=\columnwidth,
		height=0.5\columnwidth,
		xmin=0,xmax=30,ymin=0,ymax=13,
		grid=both,
	]

	\addplot [color=blue,solid,thick,mark=none,mark options={blue,solid}]
		table[x index=0,y index=1, col sep=comma ]{navigation_window2000_4000.csv};
	\addplot [color=red,solid,thick,mark=none,mark options={red,solid}]
		table[x index=2,y index=3, col sep=comma ]{navigation_window2000_4000.csv};



\end{axis}

\end{tikzpicture}
     }
    &
    \subfigure[\vspace{-0.3cm}\footnotesize Constraint satisfaction]{
    \begin{tikzpicture}[scale=0.1]
	\pgfplotsset{grid style={dashed,gray!50} }
\matrix{
 \begin{axis}[
        scaled ticks=false,
        tick label style={/pgf/number format/fixed},
	yticklabel pos=upper,
		minor tick num=1,
		xtick={0,4000,8000,12000,16000,20000},
		xticklabels={$0$,$4$,$8$,$12$,$16$,$20$},
        ticklabel style = {font=\footnotesize},
	ytick={0,0.5,1},
		yticklabels={$0$,$\frac{1}{2}$,$1$},
  width=6cm,
		height=2.75cm,
		xmin=0,xmax=20000,ymin=0,ymax=1,
		grid=both,
	]

	\addplot [color=blue,solid,thick,mark=none,mark options={blue,solid}]
		table[x index=0,y index=1, col sep=comma ]{satisfaction.csv};
	\addplot [color=red,solid,thick,mark=none,mark options={red,solid}]
		table[x index=0,y index=2, col sep=comma ]{satisfaction.csv};
	\addplot [color=black,solid,thick,mark=none,mark options={black,solid}]
		table[x index=0,y index=3, col sep=comma ]{satisfaction.csv};


\end{axis}
\\
	\begin{axis}[
        scaled ticks=false,
        tick label style={/pgf/number format/fixed},
		yticklabel pos=upper,
		minor tick num=1,
		ticklabel style = {font=\footnotesize},
        xtick={0,4000,8000,12000,16000,20000},
		xticklabels={$0$,$4$,$8$,$12$,$16$,$20$},
        ytick={0,0.5,1},
		yticklabels={$0$,$\frac{1}{2}$,$1$},
		width=6cm,
		height=2.75cm,
		xmin=0,xmax=20000,ymin=0,ymax=1,
		grid=both,
	]

	\addplot [color=blue,solid,thick,mark=none,mark options={blue,solid}]   table[x index=0,y index=1, col sep=comma ]{gazebo4zones.csv};
	\addplot [color=red,solid,thick,mark=none,mark options={red,solid}]     table[x index=0,y index=2, col sep=comma ]{gazebo4zones.csv};
	\addplot [color=black,solid,thick,mark=none,mark options={black,solid}] table[x index=0,y index=3, col sep=comma ]{gazebo4zones.csv};
	\addplot [color=black,solid,thick,mark=none,mark options={yellow,solid}]table[x index=0,y index=4, col sep=comma ]{gazebo4zones.csv};



\end{axis}
\\
};

\end{tikzpicture}
    \label{fig:numericals_satisfaction}
    }
  \end{tabular}
  \addtolength{\tabcolsep}{5pt}
  \vspace{-5ex}
}
\end{figure}




%
\section{Conclusion}\label{sec_conclusion}
We showcased the efficacy of constrained RL in handling conflicting requirements via a state-augmented approach with oscillating dual variables.
By introducing a multi-agent system that coordinates through gossiped binary variables, we offer a solution for multi-agent assignment establishing almost sure feasibility guarantees in Theorem \ref{theorem:feasibility}, corroborated in a monitoring numerical experiment involving communication intermittence and realistic robot navigation dynamics.
%
%
\section*{Appendix}
\begin{proof}[Proposition \ref{prop_gradients}]
 The structure $\pi_\theta(A_t\mid S_t) = \prod_{n=1}^N \pi_{\theta_n}(A_{tn}\mid S_{tn})$ yields  $   \nabla_{\theta_n}\log \pi_\theta(S_t,A_t) = \nabla_{\theta_n}\log \pi_{\theta_n}\left(A_{tn}\mid S_{tn}\right).$ Substituted in the policy gradient \cite{sutton2000policy} yields 
 $\nabla_{\theta_n} \mathcal{L(\pi_\theta,\lambda)} = \mathbb{E}_{(S_t,A_t)~\sim\rho_{\theta,\lambda}}\left[\nabla_{\theta_n }\log \pi_{\theta_n}(A_{tn}\mid S_{tn})Q_{\lambda}^{\pi_\theta}(S_t,A_t)\right]$.
 %
By the law of total expectation on $S_{tn}$,$A_{tn}$,  
%
 $ \nabla_{\theta_n} \mathcal{L(\pi_\theta,\lambda)} = \mathbb{E}_{(S_{tn},A_{tn})}\left[\mathbb{E}_{(S_t,A_t)~\sim\rho_{\theta,\lambda}}\left[\nabla_{\theta_n }\log \pi_{\theta_n}(A_{tn}\mid S_{tn})Q_{\lambda}^{\pi_\theta}(S_t,A_t)\mid S_{tn},A_{tn}\right]\right]$
 
 \noindent$ = \mathbb{E}_{S_{tn},A_{tn}}\left[\nabla_{\theta_n }\log \pi_{\theta_n}(A_{tn}\mid S_{tn})\mathbb{E}_{(S_t,A_t)~\sim\rho_{\theta,\lambda}}\left[Q_{\lambda}^{\pi_\theta}(S,A)\mid S_{tn},A_{tn}\right]\right]$  
%
and substitute \eqref{eqn_q_others}.
%
\end{proof}
\vspace{-2ex}
\begin{proof}[Lemma \ref{lemma_gossip}]
If no $S_{n\tau}\in \mathcal S_m$  at time $\tau$ then  \textcolor{black}{$\hat R_{m,\tau,n,\tau}=0$} for all $n\in V$ and thus \textcolor{black}{$\hat R_{m,\tau,n,\tau}=r_{m,\tau}=0 $} for all $t\geq \tau$ and for all $n\in V$. Otherwise, if it exists $n\in V$ such that  $S_{n,\tau}\in \mathcal S_m$, then \textcolor{black}{$\hat R_{m,\tau,n,\tau}=r_{m,\tau}=1$} for all $t\geq \tau$. 
\textcolor{black}{ Hence $\hat R_{m,\tau,i,\tau+1}=r_{m,\tau}=1$ for all $i\in \mathcal N_n$, and by induction $\hat R_{m,\tau,i,\tau+h}=r_{m,\tau}=1$ for all $h$-hop neighbors of $n$, from which the consensus follows.} 
\end{proof}
\vspace{-2ex}
\begin{proof}[Proposition \ref{prop_gossip}]
 Consider 
%
$\hat \lambda_{m,n,t}^{k+1} =\left[\hat \lambda_{m,n,t}^{k} - \frac{\eta}{T_0}  \sum_{\tau=k T_0}^{(k+1)T_0}  \left(\hat R_{m,\tau,n,t} -c_m\right) \right]_+ $
as in   Algorithm \ref{algo:alg_main_algorithm}.
Although this algorithm indicates updating the whole set of multipliers for all $k$ at each time step $t$, a simplification can be done using Lemma \ref{lemma_gossip}. Indeed, the result in Lemma \ref{lemma_gossip} ensures that the rewards $\hat R_{m,\tau,n,t}$ reach its steady state value $ r_{m,\tau}$ when $t\geq \tau +d(\mathcal G)$, so that $\hat \lambda_{m,n,t}^{k} =\hat \lambda_m^{k} 
$
for all  $t\geq t_{k+1}:= k T_0 +d(\mathcal G)$. Thus, the dual update can be simplified to 
\begin{align}\label{eqn_distributed_dual_simple}
\hat \lambda_{m,n,t}^{k+1} &=\begin{cases} \left[\hat \lambda_{m,n,t}^{k} - \frac{\eta}{T_0}   \sum_{\tau=k T_0}^{(k+1)T_0}  \left(\hat R_{m,\tau,n,t} -c_m\right) \right]_+, & \text{ for }t=kT_0,\ldots,t_{k+1}-1\\
 \hat \lambda_{m,n,t_{k+1}}^{k+1}=\hat \lambda_m^{k+1} & \text{ for }t \geq  t_{k+1} 
\end{cases}
\end{align}
With $t_{k+1}-kT_0= d(\mathcal G)$ binary $R_{m,\tau,n,t}$ in \eqref{eqn_distributed_dual_simple}, the error in $\hat \lambda_{m,n,t}^{k+1}$ is $\eta\frac{d(\mathcal G)}{T_0}$ at most.      
\end{proof}
%
\newpage
\acks{This work was supported in part by Spain's Agencia Estatal de Investigaci\'on under grant RYC2021-033549-I and the Mar\'ia de Maeztu Strategic Research Program under grant CEX2021-001195-M.}

\bibliography{references}

\end{document}